\documentclass[conference]{IEEEtran}
\usepackage{graphicx, subfigure}
\usepackage{multicol}
\usepackage{caption}
\usepackage{mathptmx}
\usepackage{amsmath,amssymb,amsthm}

\usepackage{amssymb}
\usepackage{subeqnarray}
\usepackage{cite}
\usepackage{float}
\usepackage{algorithm}
\usepackage{algorithmic}
\ifCLASSINFOpdf
\else
\fi

\begin{document}
%
\title{Characterization of Driver Nodes of Anti-Stable Networks}

\author{\IEEEauthorblockN{Ram Niwash Mahia}
\IEEEauthorblockA{Indian Institute of Technology Jodhpur\\
Rajasthan, INDIA-342011\\
Email: ramyashu@iitj.ac.in}
\and
\IEEEauthorblockN{Deepak Fulwani}
\IEEEauthorblockA{Indian Institute of Technology Jodhpur\\
Rajasthan, INDIA-342011\\
Email: df@iitj.ac.in}
\and
\IEEEauthorblockN{Mahaveer Singh}
\IEEEauthorblockA{Indian Institute of Technology Jodhpur\\
Rajasthan, INDIA-342011\\
Email: pg201373010@iitj.ac.in}}


%


\maketitle

\begin{abstract}
A  controllable  network can  be  driven  from  any  initial  state  to  any  desired  state using driver nodes. A set of driver nodes to control a network is not unique. It is important to characterize these driver nodes and select the right driver nodes. The work discusses theory and algorithms to select driver node such that largest region of attraction can be obtained considering limited capacity of driver node and with unstable eigenvalues of adjacency matrix. A network which can be controllable using one driver node is considered. Nonuniqueness of driver node poses a challenge to select right driver node when multiple possibilities exist. The work addresses this issue.
\end{abstract}

\begin{keywords}
Complex network, Characterization of Driver Nodes, Region of Attraction.
\end{keywords}



%
\IEEEpeerreviewmaketitle

%


\section{Introduction}
\par Twenty first century has been witnessing unprecedented growth in use of networked systems. Application domains of networked systems are as diverse as social systems to bilogical networks. A network can be defined as an entity comprising of nodes and edges where each node represents an entity such as genes in a biological network, sensors in a detection system, vehicles in a traffic system or persons/individuals in a social network; while the edges denote connection or interactions between the nodes \cite{liu2011controllability}. The state of a node can be described as position coordinates of a sensor or a robot, an opinion of a person, expression of protein by a gene etc \cite{yan2012controlling}. A complex network system is composed of several nodes, which can interact with each other as well as share information. Each node has a state variable to represent its state values. These nodes, together, form a system, which performs tasks collectively.

\par Recent control theoric analysis of complex networks provides a great deal of insight about the behaviour of complex network \cite{liu2011controllability,yan2012controlling,controllabilityPRL2007, controllabilitytransitionPRL2013, networkcontrollabilityPRL2014}. Controllability property is investigated by modeling complex network as a linear time invariant system; adjacency matrix is used as system matrix. A recent contribution has been made by Liu et al.\cite{liu2011controllability}, who developed a minimum input theory to efficiently characterize the structural controllability of directed networks using a minimum set of driver nodes to control of the system. The complex network system can also be controlled by the driver node with minimum control energy of the input which is an unavoidable and important issue \cite{Bullo2013controllability}. In \cite{Bullo2013controllability}, authors proposed the trade-off between the driver nodes and  the control energy as a function of the network dynamics using the smallest eigenvalue of the Controllability Gramian. One of the challenges in the control of complex network systems is to find a set of  right nodes (physical systems), so that controlling these nodes eventually leads to control of the entire network in a desired manner. In recent years there have been quite a few work devoted to this problem \cite{liu2011controllability,yan2012controlling,rahmani2009controllability,controllabilityPRL2007, controllabilitytransitionPRL2013, networkcontrollabilityPRL2014, optimizingcontrollabilityPRL2012}.

\par While considering the problem of controlling a complex network, it is important to consider limitations of driver nodes. Driver nodes can not have infinite (unlimited) actuating capability i.e., maximum input a driver node can provide is limited. When actuator has limited capacity, control objectives may not be achieved if this limitation is not considered  \textit{a priori}. This problem becomes more complex when adjacency matrix has unstable eigenvalues; even stability can not be guaranteed in this situation. With unstable eigenvalues of adjacency matrix of complex network, region in state space, where stability is guaranteed, is finite when actuator (driver node) limitation is considered. We denote this region as region of attraction (ROA). This ROA depends on choice of driver node and control law. This work characterizes driver nodes and proposes a theory and algorithms such that the right driver node, among many, can be found to maximize region of attraction. In $n$ dimensional state space, region is described by ellipsoid. In context of linear system with actuator saturation, many excellent work exist see \cite{gutman1982new,Thu2001actuatorsaturationbook,lauvdal1997stabilization} and the references therein. Number of driver nodes required to control the network is fixed i.e., set of driver nodes has fixed number of nodes, however, driver nodes are not unique. This raises a very important question about the choice of driver nodes when there are many nodes qualify for driver nodes. In this work we address two important issues the first issue is regarding selection of driver node and in second we consider driver node limitation. We propose a theory and algorithm to select driver node such that region of attraction corresponding to this driver node maximizes. In this work to obtain largest possible region with a given control law, the low and high gain (LHG) technique \cite{gutman1982new,turner2001guaranteed, lowhighreview2009,saberi1996control} is used. Our emphasis on selection of right driver nodes rather than on control law. To the best of authors knowledge this work does not exist.

\par Rest of the paper is organized as follows. Starting with an introduction in section I, the section II describes the modeling of the networks with saturated actuator. Section III presents the theoretical results to characterize the maximum volume of the invariant ellipsoids of anti-stable system and transformed anti-stable subsystem. The simulation results of the two networks and their discussion are included in section IV. Finally, the concluding remark is included in the  section V. 


\newtheorem{thm}{\textbf{Theorem}}
\newtheorem{rem}{\textbf{Remark}}
\newtheorem{lm}{\textbf{Lemma}}
\newtheorem{assum}{\textbf{Assumption}}
\newtheorem{condi}{\textbf{Condition}}
\newtheorem{defi}{\textbf{Definition}}
\newtheorem{propos}{\textbf{Proposition}}
\newtheorem{corol}{\textbf{Corollary}}
\newtheorem{example}{\textbf{Example}}
\section{Modeling of Network with Saturated Actuator}
A network with $n$ nodes can be represented by a graph $\mathrm{G}: =(\mathrm{V},\mathrm{E})$, where $\mathrm{V} :=\{1,2, \ldots, n\}$ and $\mathrm{E} \subseteq \mathrm{V} \times \mathrm{V}$ are the set of nodes and the set of edges respectively. The weighted adjacency matrix of the network $\mathrm{G}$ is described as $\mathrm{A}(\mathrm{G})$, where the elements are $a_{ij} \in \mathbb{R} $ if node $i$ and $j$ adjacent and $0$ otherwise. Let us assume that network $\mathrm{G}$ is independently controllable by each of the nodes from the set $\bar{U} = \{u_{1},u_{2}, \ldots, u_{p}\}$ i.e. each input $u_{i}, ~ where ~ i=1,2,\ldots,p$, acting alone can control the entire network.
\par Consider a complex network with input $u_{i}$, let us designate corresponding input matrix as $B_{i}$, with this, dynamics of network with bounded control can be written as
\begin{eqnarray}
\centering
\frac{d\textbf{x}(t)}{dt}=A(G)\textbf{x}(t)+B_{i}sat(u_{i}(t))            \label{eq.1}
\end{eqnarray}
where $A(G) \in \mathbb{R}^{n\times n}$ adjacency matrix of the network $G$, $B_{i} \in \mathbb{R}^{n\times 1}$ input matrix corresponding to  input $u_{i}$, $\textbf{x} \in \mathbb{R}^{n\times 1}$ state vector (n-vector), $u_{i} \in \mathbb{R} $ control signal (scalar); It should be noted that any $u_{i}$ can control the network. The standard saturation function $`sat'$ is defined as $ sat \colon \mathbb{R} \rightarrow \mathbb{R} $, i.e., $sat(u_{i}(t))= sign(u_{i}(t)) min \{u_{i,max},~ \lvert u_{i}(t) \rvert\}$ where $u_{i,max}>0$, $\forall ~ i=1,2,\ldots,p$ is the saturation limit control input $u_{i}$. Without loss of generality let us assume $u_{i,max}=1$. The state vector is defined as $\textbf{x}=[x_{1} ~ x_{2} \ldots x_{n}]^{\mathrm{T}}$. \\
Let us define control law for network (\ref{eq.1}) as
\begin{equation}
\centering
u_{i}:=-K_{i}\textbf{x}=-R^{-1}B_{i}^{T}P_{i}\textbf{x}   \label{eq.2}
\end{equation}
Where $R>0$ is a chosen matrix and the matrix $P_{i}>0$ is obtained by solving the following Ricatti equation for some $Q>0$
\begin{equation}
\centering
A^{T}P_{i}+P_{i}A-P_{i}B_{i}R^{-1}B_{i}^{T}P_{i}+Q=0   \label{eq.3}
\end{equation}
\par Now, the control law (\ref{eq.2}) is divided into two equal parts, low gain and high gain i.e. $u_{i,L}=-\frac{1}{2}R^{-1}B_{i}^{T}P_{i}\textbf{x}$ and $u_{i,H}=-\frac{1}{2}R^{-1}B_{i}^{T}P_{i}\textbf{x}$ respectively, where the low gain part is not allowed to saturate but the overall control law  is allowed \cite{turner2001guaranteed}.
\begin{equation}
\centering
u_{i}:=u_{i,L}+u_{i,H}   \nonumber
\end{equation}
\par Under the control law (\ref{eq.2}), the network is asymptotically stable for all states lie in the invariant ellipsoid $\varepsilon(P_{i},\delta_{i})$, \cite{gutman1982new} given as
 \begin{equation}
 	\begin{aligned}
 	\varepsilon(P_{i},\delta_{i})=\{\textbf{x}:\textbf{x}^{T}P_{i}\textbf{x}\leq\delta_{i}\} 														\label{eq.4}
 	\end{aligned}
 	\end{equation}
 	and $\delta_{i}>0$ is defined as
 \begin{equation}
  	\begin{aligned}								\label{eq.5}
  	\delta_{i} := \frac{4r_{i}^{2}}{B_{i}^{T}P_{i}B_{i}}
  	\end{aligned}
  	\end{equation}
  Where $r_{i}$ is the $i^{th}$ diagonal entry of $R>0$ and it is assumed $r_{i}=1$. Equation (\ref{eq.5}) gives the radius of the invariant ellipsoid $\varepsilon(P_{i},\delta_{i})$ and it is obtained by using the low gain part of the linear quadratic (LQ) control (\ref{eq.2}). The proof of (\ref{eq.5}) is given in \cite{turner2001guaranteed,gutman1982new,Henrion1999}. It is evident from (\ref{eq.5}), that with different input node (driver node) acting independently, the matrices $P_{i}$ and $B_{i}$ will change and this subsequently changes $\delta_{i}$ and corresponding region of attraction.

\section{Main Results}
This section is divided into three parts, first, we explain the problem statement, second, we derive the conditions for ROA of the anti-stable network with different control inputs and in third, we derive the conditions for ROA of the anti-stable subsystem network with different control inputs of the network (\ref{eq.1}).
\subsection{Problem Statement}
A complex network system of homogeneous nodes with a set of nodes, having capability to control the network acting alone, is considered. A set of driver nodes\textemdash each driver node can control the network independently\textemdash is considered. Furthermore, with different input, region where stability, with saturated input, is guaranteed may expand or shrink. This work presents theory and algorithms to find an input such that largest region of attraction can be achieved. The objective is to characterize an input such that ROA becomes largest in comparison to region obtained by each of the other driver nodes from $\bar{U}$ acting alone.

\subsection{Region of Attraction and Selection of Driver Node of an Anti-stable Network With Saturated Driver Noder}
This section discusses region of attraction and volume of invariant ellipsoid for the system defined in (\ref{eq.1}) with all eigenvalues of $A(G)$ are unstable.
Consider an anti-stable network (all eigenvalues of $A(G)$ are unstable) (\ref{eq.1}) with driver node saturation. The contractively invariant ellipsoids corresponding to input matrix $B_{i}$ of the network (\ref{eq.1}) depends on the positive definite matrix $P_{i}$.

\begin{lm} For the system defined in (\ref{eq.1}) with control input $u_{i}$, control law (\ref{eq.2}) and radius of the invariant ellipsoid defined in (\ref{eq.5}): \\
 (i) The volume of the invariant ellipsoid is given by
 	\begin{equation}
  		\centering
  		Vol_{i}=\frac{S(0,\sqrt{\delta_{i}})}{\sqrt{det(P_{i})}}
  												\label{eq.6}
  		\end{equation}
  		Where $S(0,\sqrt{\delta_{i}})$ is the volume of $n-$dimensional sphere $(n\geq~3)$ of radius $\sqrt{\delta_{i}}$ and center at origin, $det(P_{i})$ is the determinant of the positive definite matrix $P_{i}$. \\
 (ii) Largest region of attraction corresponds to $i^{th}$ driver node for which volume becomes largest i.e. \\
  	$Largest~~ volume=Max_{i}\{Vol_{i}\} ~~~ i=1,2,\ldots,p$
  	\end{lm}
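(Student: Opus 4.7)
For part (i), the plan is to compute the volume of the ellipsoid $\varepsilon(P_i,\delta_i)=\{x:x^T P_i x \le \delta_i\}$ directly via a spectral change of coordinates and then recognize the spherical volume in the numerator. Since $P_i$ is symmetric positive definite, I would write $P_i = U_i \Lambda_i U_i^T$ with $U_i$ orthogonal and $\Lambda_i = \mathrm{diag}(\lambda_{i,1},\ldots,\lambda_{i,n})$, $\lambda_{i,j}>0$. The substitution $y = U_i^T x$ has unit Jacobian (orthogonal transformation), so it preserves $n$-dimensional Lebesgue measure, and the ellipsoid maps to the axis-aligned set $\{y:\sum_j \lambda_{i,j} y_j^2 \le \delta_i\}$, whose semi-axes have lengths $\sqrt{\delta_i/\lambda_{i,j}}$.

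The volume of an axis-aligned ellipsoid with semi-axes $a_1,\ldots,a_n$ is $V_n \prod_j a_j$, where $V_n$ denotes the volume of the $n$-dimensional unit ball. Substituting the semi-axes gives
\begin{equation*}
\mathrm{Vol}_i = V_n \prod_{j=1}^{n}\sqrt{\frac{\delta_i}{\lambda_{i,j}}} = \frac{V_n\,\delta_i^{n/2}}{\sqrt{\prod_j \lambda_{i,j}}} = \frac{V_n\,\delta_i^{n/2}}{\sqrt{\det(P_i)}},
\end{equation*}
since $\det(P_i) = \prod_j \lambda_{i,j}$. I would then identify the numerator: the volume of the $n$-sphere centered at the origin with radius $\sqrt{\delta_i}$ is exactly $S(0,\sqrt{\delta_i}) = V_n (\sqrt{\delta_i})^n = V_n\,\delta_i^{n/2}$, which yields (\ref{eq.6}). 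The mild subtlety worth flagging is that $\delta_i$ as supplied by (\ref{eq.5}) and $\det(P_i)$ both depend on the driver-node choice through $B_i$ and the solution of the Riccati equation (\ref{eq.3}), so the formula is well-defined for each admissible $i$ but its value varies with $i$.

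For part (ii), the argument is immediate once (i) is established. Each driver node $i\in\{1,2,\ldots,p\}$ yields an invariant ellipsoid $\varepsilon(P_i,\delta_i)$ on which asymptotic stability under (\ref{eq.2}) is guaranteed, and by construction any state inside $\varepsilon(P_i,\delta_i)$ lies in the region of attraction associated with that driver node. Since $p$ is finite, the quantity $\{\mathrm{Vol}_i\}_{i=1}^{p}$ is a finite set of positive reals and attains its maximum at some index $i^\star$; selecting driver node $i^\star$ therefore furnishes the largest guaranteed region of attraction in this family, proving (ii).

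The main technical point is really just the volume identity in (i); I expect no serious obstacle beyond carefully invoking orthogonal invariance of Lebesgue measure and the product-of-eigenvalues identity for the determinant. The result in (ii) is a one-line maximization over a finite set and carries no independent analytical difficulty.
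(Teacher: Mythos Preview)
Your proposal is correct and follows essentially the same approach as the paper: both use the spectral decomposition of $P_i$ to reduce the ellipsoid volume to a sphere volume divided by $\sqrt{\det(P_i)}$, and both treat part (ii) as an immediate finite maximization. The only cosmetic difference is that the paper applies a single change of variables $\mathbf{x}=T_i\mathbf{y}$ with $T_i^T P_i T_i = I$ (Jacobian $1/\sqrt{\det(P_i)}$) to land directly on the sphere $\{\|\mathbf{y}\|^2\le\delta_i\}$, whereas you first rotate orthogonally to an axis-aligned ellipsoid and then invoke the product-of-semi-axes formula; the two computations are equivalent.
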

  \begin{proof} To prove the first part, the outer boundaries of the contractively invariant ellipsoids $\varepsilon(P_{i},\delta_{i})=\{\textbf{x}:\textbf{x}^{T}P_{i}\textbf{x}\leq\delta_{i}\}$ of the linear anti-stable system (\ref{eq.1}) are obtained from the following equation
  \begin{equation}
  \centering
  \textbf{x}^{T}P_{i}\textbf{x}=~\delta_{i}
  											\label{eq.7}
  \end{equation}
  and the volume of the contractively invariant ellipsoid is given by
  \begin{equation}
  \centering
  Vol_{i}=\underbrace{\idotsint}_{x_{j_{min},j=1,..,n}}^{x_{j_{max}}}\underbrace{dx_{1}dx_{2}....dx_{n}}
  											\label{eq.8}
  \end{equation}
  Where $x_{j_{min}}$ and $x_{j_{max}}$ are the minimum and maximum limits of state vector $\textbf{x}$ in $j^{th}$ direction. \\
  Now assume $\textbf{x}=T_{i}\textbf{y}$, where $T_{i}>0$ is a positive definite matrix and $\textbf{y} \in \mathbb{R}^{n}$ state vector, then equation (\ref{eq.7}) can be rewritten as
  \begin{equation}
  \centering
  \textbf{y}^{T}T_{i}^{T}P_{i}T_{i}\textbf{y}=~\delta_{i}
  												\label{eq.9}
  \end{equation}
  The positive definite matrix $T_{i}$ is chosen in such a way that  $T_{i}^{T}P_{i}T_{i}=I_{n\times n}$. This implies that $T_{i}=((M_{i}\sqrt{D_{i}})^{-1})^{T}$, where $M_{i}$ is the eigenvector matrix of positive definite matrix $P_{i}$ and matrix $D_{i}$ is the diagonal matrix (elements are the eigenvalues of the matrix $P_{i}$).
  \par Using change of variables $\textbf{x}=T_{i}\textbf{y}$, we can obtain $\underbrace{dx_{1}~dx_{2}~\ldots~dx_{n}}=(\frac{1}{\sqrt{det(P_{i})}})\underbrace{dy_{1}~dy_{2}~\ldots~dy_{n}}$.
   Equation (\ref{eq.8}) become
  \begin{equation}
  \centering
  Vol_{i}=(\frac{1}{\sqrt{det(P_{i})}})\underbrace{\idotsint}_{y_{j_{min},j=1,\ldots,n}}^{y_{j_{max}}}\underbrace{dy_{1}~dy_{2}~\ldots~dy_{n}}.
  												\label{eq.10}
  \end{equation}
  Now, the volume of $n-$dimensional sphere $(n\geq~3)$ is defined as
  \begin{equation}
  \centering
  S(0,\sqrt{\delta_{i}})=:\underbrace{\idotsint}_{y_{j_{min},j=1,\ldots,n}}^{y_{j_{max}}}\underbrace{dy_{1}~dy_{2}~\ldots~dy_{n}}.
  												\label{eq.11}
  \end{equation}
  with center at origin and radius $\sqrt{\delta_{i}}$. \\
   Using (\ref{eq.10}) and (\ref{eq.11}), we get (\ref{eq.6}). \\
  After obtaining region of attraction (volume) corresponding to each of the driver nodes, (ii) implies directly. \\
  This completes the proof of lemma $1$.
  \end{proof}
  \begin{corol}
  	For $n=2$, we need to compute area. The area of ellipse (region) is given by
  	\begin{equation}
  	\centering
  	Area_{i}=\frac{\pi~\delta_{i}}{\sqrt{det(P_{i})}}
  													\label{eq.12}
  	\end{equation}
  	$Area_{i}>~Area_{j}\Longleftrightarrow~\delta_{j}^{2}det(P_{i})~<~\delta_{i}^{2}det(P_{j})$, $i \neq j =1,2$. The $i^{th}$ node is chosen as the driver node of the system (\ref{eq.1}), corresponding to the largest area.
  	\end{corol}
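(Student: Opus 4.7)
The plan is to mirror the argument of Lemma 1 specialized to $n=2$, where the general $n$-dimensional sphere volume $S(0,\sqrt{\delta_i})$ reduces to the closed-form disk area $\pi\delta_i$. First I would observe that for $n=2$ the invariant set $\varepsilon(P_i,\delta_i)=\{\textbf{x}\in\mathbb{R}^2 : \textbf{x}^T P_i \textbf{x} \leq \delta_i\}$ is a planar ellipse centered at the origin, so its area can be computed by the iterated double integral that is the two-dimensional analogue of (\ref{eq.8}).

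Next I would introduce the same linear change of variables $\textbf{x}=T_i \textbf{y}$ with $T_i=((M_i\sqrt{D_i})^{-1})^T$ as in the proof of Lemma 1, so that $T_i^T P_i T_i = I_{2\times 2}$ and the ellipse transforms into the disk $\{\textbf{y}\in\mathbb{R}^2 : \textbf{y}^T\textbf{y} \leq \delta_i\}$ of radius $\sqrt{\delta_i}$. The Jacobian of this transformation satisfies $dx_1\,dx_2 = \frac{1}{\sqrt{\det(P_i)}}\,dy_1\,dy_2$, exactly as derived previously. Since the area of a disk of radius $\sqrt{\delta_i}$ in the plane is simply $\pi\delta_i$, this immediately yields $Area_{i} = \frac{\pi\,\delta_{i}}{\sqrt{\det(P_{i})}}$, which is (\ref{eq.12}).

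For the ordering claim I would argue by straightforward algebra. Starting from $Area_{i}>Area_{j}$, I would cancel the common factor $\pi$, cross-multiply using that $\delta_i,\delta_j,\det(P_i),\det(P_j)$ are all strictly positive (so no inequality flips), and then square both sides\textemdash valid because both sides are positive\textemdash to obtain the equivalent condition $\delta_{j}^{2}\det(P_{i}) < \delta_{i}^{2}\det(P_{j})$. Each step is reversible, giving the ``$\Longleftarrow$'' direction as well. The driver-node selection rule then follows by applying this pairwise comparison across the indices $i=1,2$.

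I do not anticipate any substantive obstacle in this proof, since it is essentially a direct two-dimensional specialization of Lemma 1; the only dimension-dependent ingredient is the volume of the unit ball, which in $\mathbb{R}^2$ is known explicitly as $\pi$ times the squared radius. The only point requiring mild care is ensuring the legitimacy of squaring in the area comparison, which follows immediately from the positivity of $\delta_i$, $\delta_j$, $\det(P_i)$ and $\det(P_j)$.
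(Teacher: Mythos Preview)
Your proposal is correct and follows exactly the approach the paper intends: the corollary is stated without separate proof precisely because it is the two-dimensional specialization of Lemma~1, with the only new ingredient being the explicit disk area $\pi\delta_i$ in place of the general $S(0,\sqrt{\delta_i})$. Your handling of the ordering equivalence via positivity and squaring is also fine and fills in a detail the paper leaves implicit.
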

   We summarize computation of $\delta_{i}$ and $Area_{max}$~/~$Vol_{max}$ for a general $n$ nodes network with $p$ nodes being (each of the nodes can control the network) driver nodes.
   \begin{algorithm}[H]
   \caption{Find maximum area/volume of the ellipsoid and driver node of a network with all eigenvalues in RHP}\label{euclid}
   \textbf{Input} : Network $G:=(V,E)$, Number of nodes $n$, driver nodes $p$ (These driver nodes can control the network independently), $Q>0$ and $R>0$.\\
   \textbf{Output} :  Maximum Area/Volume of the ellipsoid and  corresponding $i^{th}$ driver node
   \begin{algorithmic}[1]
   \FOR{$i=1:p$}
   \STATE Select $B_{i} = [0,0, \ldots , b_{i}, \ldots, 0,0]^{T}$ \\
   \STATE Obtain the positive definite matrix $P_{i}>0$ using (\ref{eq.3})
   \begin{equation}
   \centering
   A^{T}P_{i}+P_{i}A-P_{i}B_{i}R^{-1}B_{i}^{T}P_{i}+Q=0   \nonumber
   \end{equation}
   where $i=1,2,\ldots,p$, $j=1,2,\ldots,n$. \\
   \STATE Calculate the radius of the ellipsoid $\delta_{i}$ as \\
    $\delta_{i}= \frac{4}{\Sigma_{i}}$ \\
    where $\Sigma_{i} = B_{i}^{T}P_{i}B_{i} = b_{i}^{2}p_{jj}^{(i)}$ \\
   \ENDFOR
   \IF {$n=2$}
   \STATE Calculate $Area_{max}$ as \\
    $Area_{max}=arg~ max_{i} \{Area_{i}~|~ i=1,2,\ldots,p\}$ \\
    where $Area_{i}=\frac{\pi \delta_{i}}{\sqrt{det(P_{i})}}$ \\
   \ELSE
   \STATE Calculate $Vol_{max}$ as \\
   $Vol_{max}=arg~ max_{i} \{Vol_{i}~|~ i=1,2,\ldots,p\}$ \\
   where $Vol_{i}=\frac{S(0,\sqrt{\delta_{i}})}{\sqrt{det(P_{i})}}$ \\
   where $S(0,\sqrt{\delta_{i}})$ is the volume of $n-$dimensional sphere $(n\geq~3)$ of radius $\sqrt{\delta_{i}}$ and center at origin, $det(P_{i})$ is the determinant of the positive definite matrix $P_{i}$.
   \ENDIF
   \RETURN{} $Area_{max}$~/~$Vol_{max}$ and node $i$ for which $Area_{max}$ ~/~ $Vol_{max}$ 
   \end{algorithmic}
   \end{algorithm}
   
 \subsection{Region of Attraction and Selection of Driver Node of a Network with Stable and Unstable Eigenvalues}
 Now, we consider a more general network which has stable and unstable eigenvalues. We transform the original system (\ref{eq.1}) into two subsystems. The first subsystem has all unstable eigenvalues and the second subsystem has all stable eigenvalues. In what follows, we consider subsystem with unstable eigenvalues and obtain region of attraction. In this case, the system will be diagonalized using a transformation matrix $V$. Let $\textbf{x}=V\textbf{z}$, the transformation matrix $V$ is the matrix which has eigenvectors of the matrix $A(G)$. \\
   The transformed system has an anti-stable subsystem as well as a stable subsystem of the original system (\ref{eq.1}) defined as
   \begin{eqnarray}
   \centering
   \frac{d\textbf{z}(t)}{dt}=\tilde{A}\textbf{z}(t)+\tilde{B}_{i}sat(u_{i}(t)) 													\label{eq.13}
   \end{eqnarray}
   Where
   \begin{equation}
   \begin{aligned}                              			   																	\label{eq.14}
   \tilde{A} =V^{-1}AV = \begin{bmatrix}
   \tilde{A}_{1} &  0  \\[0.3em]
   0 & \tilde{A}_{2}  \\[0.3em]
            \end{bmatrix}
   \end{aligned}
   \end{equation}
   where $ \tilde{A}_{1} \in \mathbb{R}^{k \times k}$ anti-stable matrix and  $ \tilde{A}_{2} \in \mathbb{R}^{n-k \times n-k}$ stable matrix, and the control input matrix $ \tilde{B}_{i} $ can be partitioned as
   \begin{equation}
   \begin{aligned}                						\label{eq.15}
   \tilde{B}_{i} = V^{-1}B_{i} = \begin{bmatrix}
   \tilde{B}_{1i}   \\[0.3em]
   \tilde{B}_{2i}  \\[0.3em]
   \end{bmatrix}
   \end{aligned}
   \end{equation}
    where $\tilde{B}_{1i} \in \mathbb{R}^{k \times 1}$ and $\tilde{B}_{2i} \in \mathbb{R}^{n-k \times 1}$, $n$ and $k$ are the number of nodes (order) in system (\ref{eq.1}) and order of the anti-stable subsystem respectively. \\
   The transformed system (\ref{eq.13}) can be rewritten as
   \begin{equation}
   \begin{aligned}                 						\label{eq.16}
   \begin{bmatrix}
   \frac{d\textbf{z}_{1}(t)}{dt} \\[0.3em]
   \frac{d\textbf{z}_{2}(t)}{dt} \\[0.3em]
   \end{bmatrix}=
   \begin{bmatrix}
   \tilde{A}_{1} &  0  \\[0.3em]
   0 & \tilde{A}_{2}  \\[0.3em]
            \end{bmatrix}
            \begin{bmatrix}
            \textbf{z}_{1}(t) \\[0.3em]
            \textbf{z}_{2}(t) \\[0.3em]
            \end{bmatrix}+
   \begin{bmatrix}
   \tilde{B}_{1i}   \\[0.3em]
   \tilde{B}_{2i}  \\[0.3em]
   \end{bmatrix} sat(u_{i}(t))
   \end{aligned}
   \end{equation}
   Now we describe the anti-stable subsystem of (\ref{eq.16}) as follows
        \begin{eqnarray}
        \centering
        \frac{d\textbf{z}_{1}(t)}{dt}=\tilde{A}_{1}\textbf{z}_{1}(t)+\tilde{B}_{1i} sat(u_{i}(t))               					\label{eq.17}
        \end{eqnarray}
   \begin{rem}
   To obtain region of attraction, only anti-stable subsystem of (\ref{eq.17}) needs to be considered. For open-loop stable network, global stabilization is possible with saturated driver nodes.
   \end{rem}
  The control law $u_{i}(t)$ can be defined as
   \begin{equation}
   \centering
   u_{i}(t)=-H\textbf{z}_{1}                 \label{eq.18}
   \end{equation}
   where $H=R_{1}^{-1}\tilde{B}_{1i}^{T}\tilde{P}_{1i}$ is a feedback gain matrix. \\
   The positive definite matrix $\tilde{P}_{1i}>0$ is the solution to the following Algebraic Ricatti Equation
    \begin{equation}
    \centering
    \tilde{A}_{1}^{T}\tilde{P}_{1i}+\tilde{P}_{1i}\tilde{A}_{1}-\tilde{P}_{1i}\tilde{B}_{1i}R_{1}^{-1}\tilde{B}_{1i}^{T}\tilde{P}_{1i}+Q_{1}=0                             						\label{eq.19}
    \end{equation}
    where $Q_{1}>0$ and $R_{1}>0$ are chosen positive definite and diagonal matrices. \\
    The radius ($\tilde{\delta}_{1i}$) of the invariant ellipsoid ($\varepsilon_{1}(\tilde{P}_{1 i},\tilde{\delta}_{1i})=\{\textbf{z}_{1}:\textbf{z}_{1}^{T}\tilde{P}_{1 i}\textbf{z}_{1} \leq \tilde{\delta}_{1i}\}$) of the anti-stable network (\ref{eq.17}) is defined as
    \begin{equation}
     	\begin{aligned}
     	\tilde{\delta}_{1i} = \frac{4}{\Sigma_{1i}}      \label{eq.20}
     	\end{aligned}
     	\end{equation}
     	where $\Sigma_{1i}:= \tilde{B}_{1i}^{T}\tilde{P}_{1i}\tilde{B}_{1i}$, ~ $i=1,2,\ldots,p$.
    \begin{lm} For the anti-stable subsystem defined in (\ref{eq.17}) with control law (\ref{eq.18}) and radius of invariant ellipsoid (\ref{eq.20}) \\
           (i) The volume of the invariant ellipsoid is given by \\
           	\begin{equation}
           	\begin{aligned}    						\label{eq.21}
           Vol_{1i}=\frac{S(0,~\sqrt{\tilde{\delta}_{1i}})}{\sqrt{det(\tilde{P}_{1i})}}
           	\end{aligned}
           	\end{equation}
                Where $S(0,\sqrt{\tilde{\delta}_{1i}})$ is the volume of $k-$dimensional sphere $(k\geq~3)$ of radius $\sqrt{\tilde{\delta}_{1i}}$ and center at origin, $det(\tilde{P}_{1i})$ is the determinant of the positive definite matrix $\tilde{P}_{1i}$. \\
            (ii) Largest region of attraction corresponds to $i^{th}$ input for which volume becomes largest i.e. \\
                	$Largest~~ volume=Max_{i}\{Vol_{1i}\} ~~~ i=1,2,\ldots,p$
            \end{lm}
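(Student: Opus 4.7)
The plan is to mirror the three-step argument used for Lemma 1, reworked for the anti-stable subsystem (\ref{eq.17}) in the reduced $\textbf{z}_{1}$-coordinates of $\mathbb{R}^{k}$. First I would write the volume of the invariant ellipsoid $\varepsilon_{1}(\tilde{P}_{1i},\tilde{\delta}_{1i})=\{\textbf{z}_{1}:\textbf{z}_{1}^{T}\tilde{P}_{1i}\textbf{z}_{1}\leq\tilde{\delta}_{1i}\}$ as a $k$-fold iterated integral of the volume element over the region bounded by $\textbf{z}_{1}^{T}\tilde{P}_{1i}\textbf{z}_{1}=\tilde{\delta}_{1i}$, paralleling equations (\ref{eq.7})--(\ref{eq.8}) of Lemma 1 but with $n$ replaced by $k$ and $\textbf{x}$ replaced by $\textbf{z}_{1}$.

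Next I would introduce the coordinate change $\textbf{z}_{1}=T_{1i}\textbf{w}$ with $T_{1i}$ chosen so that $T_{1i}^{T}\tilde{P}_{1i}T_{1i}=I_{k\times k}$; taking $T_{1i}=((M_{1i}\sqrt{D_{1i}})^{-1})^{T}$, where $M_{1i}$ and $D_{1i}$ are respectively the eigenvector matrix and the diagonal eigenvalue matrix of $\tilde{P}_{1i}$, achieves this simultaneous diagonalization exactly as in Lemma 1. Since $|\det T_{1i}|=1/\sqrt{\det(\tilde{P}_{1i})}$, the Jacobian contributes a factor of $1/\sqrt{\det(\tilde{P}_{1i})}$, and the ellipsoidal constraint collapses to $\textbf{w}^{T}\textbf{w}\leq\tilde{\delta}_{1i}$, a $k$-dimensional ball of radius $\sqrt{\tilde{\delta}_{1i}}$ centered at the origin. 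Pulling this constant factor outside the integral and identifying the remaining integral as the standard sphere volume $S(0,\sqrt{\tilde{\delta}_{1i}})$ (valid for $k\geq 3$) yields (\ref{eq.21}) and establishes part (i).

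Part (ii) is then an immediate consequence: having tabulated $Vol_{1i}$ for every candidate driver index $i=1,2,\ldots,p$, the largest region of attraction corresponds, by definition, to the index attaining $Max_{i}\{Vol_{1i}\}$. The proof presents no genuine obstacle, since both the existence of $\tilde{P}_{1i}>0$ solving the algebraic Riccati equation (\ref{eq.19}) and the contractively invariant character of the ellipsoid under the low--high-gain closed loop (\ref{eq.18}) transfer directly from the full-system analysis to the anti-stable subsystem thanks to Remark 1, which confirms that only the anti-stable block needs to be considered for stability. The main care points are to verify that $T_{1i}$ is well-defined, which uses only $\tilde{P}_{1i}>0$, and to note that the low-dimensional case $k=2$ would need the planar area formula of Corollary 1 rather than (\ref{eq.21}).
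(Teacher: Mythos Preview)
Your proposal is correct and follows exactly the route the paper intends: the paper's own proof of this lemma is simply the remark that it is ``similar to given in lemma 1,'' and your write-up is precisely that Lemma~1 argument transplanted to the $k$-dimensional $\textbf{z}_{1}$-coordinates with $(\tilde P_{1i},\tilde\delta_{1i})$ in place of $(P_i,\delta_i)$. Your additional observations about $T_{1i}$ being well-defined from $\tilde P_{1i}>0$ and about the $k=2$ case needing Corollary~2 are appropriate and consistent with the paper.
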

            \begin{proof}
            The proof of lemma $2$ is similar to given in lemma $1$.
            \end{proof}
      
        We summarize computation of $\tilde{\delta}_{1i}$ and $Area_{1,max}$~/~ $Vol_{1,max}$ for a general network of order $n$ with dimension of anti-stable subsystem $k$, $k<n$ in the algorithm $2$.    
      \begin{algorithm}[H]
      \caption{Find maximum area/volume of the invariant ellipsoid and driver nodes of a network system (\ref{eq.1})}  \label{euclid1}
      \textbf{Input} : Network $G:=(V,E)$, Number of nodes $n$, Number of anti-stable nodes $k$, driver nodes $p$, $Q_{1}>0$ and $R_{1}>0$. \\
      \textbf{Output} : Maximum area/volume of the ellipsoid and corresponding $i^{th}$ driver node
      \begin{algorithmic}[1]
      \STATE Partition the matrix $\tilde{A}$ using (\ref{eq.14})
      \begin{equation}
      \begin{aligned}  \nonumber
      \tilde{A} =V^{-1}AV = \begin{bmatrix}
      \tilde{A}_{1} &  0  \\[0.3em]
      0 & \tilde{A}_{2}  \\[0.3em]
               \end{bmatrix}
      \end{aligned}
      \end{equation}
      where $\tilde{A}_{1} \in \mathbb{R}^{k \times k}$ and $\tilde{A}_{2} \in \mathbb{R}^{n-k \times n-k}$.
      \FOR { $i=1:p$ }
      \STATE Select $B_{i} = [0,0, \ldots , b_{i}, \ldots, 0,0]^{T}$ of the original system (\ref{eq.1}) \\
      \STATE Partition $\tilde{B}_{i}$ using (\ref{eq.15}), given as
      \begin{equation}
      \begin{aligned} \nonumber
      \tilde{B}_{i} = V^{-1}B_{i} = \begin{bmatrix}
      \tilde{B}_{1i}   \\[0.3em]
      \tilde{B}_{2i}  \\[0.3em]
      \end{bmatrix}
      \end{aligned}
      \end{equation}\\
      where $\tilde{B}_{1i} \in \mathbb{R}^{k \times 1}$ and $\tilde{B}_{2i} \in \mathbb{R}^{n-k \times 1}$.
      \STATE Calculate matrix $\tilde{P}_{1 i}$ using (\ref{eq.19}), given as
      \begin{equation}
      \centering
      \tilde{A}_{1}^{T}\tilde{P}_{1 i}+\tilde{P}_{1 i}\tilde{A}_{1}-\tilde{P}_{1 i}\tilde{B}_{1 i}R_{1}^{-1}\tilde{B}_{1 i}^{T}\tilde{P}_{1 i}+\tilde{Q}_{1}=0   \nonumber
      \end{equation}
      \STATE Calculate the radius of the invariant ellipsoid $\tilde{\delta}_{1i}$ as \\
          $\tilde{\delta}_{1i}= \frac{4}{\Sigma_{1i}}$ \\
          where $\Sigma_{1 i}=\tilde{B}_{1 i}^{T}\tilde{P}_{1 i}\tilde{B}_{1 i}$ \\
      \ENDFOR
      \IF {$k=2$}
       \STATE Calculate $Area_{1,max}$ as \\
        $Area_{1,max}=arg~ max_{i} \{Area_{1i}~|~ i=1,2,\ldots,p\}$ \\
        where $Area_{1i}=\frac{\pi \tilde{\delta}_{i}}{\sqrt{det(\tilde{P}_{i})}}$ \\
       \ELSE
      \STATE Calculate $Vol_{1,max}$ as \\
       $Vol_{1,max}=arg~ max_{i} \{Vol_{1i}~|~ i=1,2,\ldots,p\}$ \\
       where $Vol_{1i}=\frac{S(0,~\sqrt{\tilde{\delta}_{1i}})}{\sqrt{det(\tilde{P}_{1i})}}$ \\
       where $S(0,\sqrt{\tilde{\delta}_{1i}})$ is the volume of $k-$dimensional sphere $(k\geq~3)$ of radius $\sqrt{\tilde{\delta}_{1i}}$ and center at origin, $det(\tilde{P}_{1i})$ is the determinant of the positive definite matrix $\tilde{P}_{1i}$.
       \ENDIF
       \RETURN{} $Area_{1,max}$~/~$Vol_{1,max}$ and node $i$ for which $Area_{1,max}$~/~$Vol_{1,max}$
      \end{algorithmic}
      \end{algorithm}
 \begin{corol}
             	For $k=2$, the region becomes area. The area of invariant ellipse (region) of system defined in (\ref{eq.17}) is given by
             		\begin{equation}
             		\centering
             		Area_{1i}=\frac{\pi~\tilde{\delta}_{1i}}{\sqrt{det(\tilde{P}_{1i})}}
             									\label{eq.22}
             		\end{equation}
             		$Area_{1i}>~Area_{1j}\Longleftrightarrow~\tilde{\delta}_{1j}^{2}det(\tilde{P}_{1i})~<~\tilde{\delta}_{1i}^{2}det(\tilde{P}_{1j})$, $i \neq j =1,2$, then $i^{th}$ node will be chosen as the driver node of the system (\ref{eq.1}) corresponding to largest area.
             	\end{corol}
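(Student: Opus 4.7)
The plan is to obtain Corollary 2 as a direct specialization of Lemma 2 to the case $k=2$, in which the invariant ellipsoid $\varepsilon_{1}(\tilde P_{1i},\tilde\delta_{1i})$ lives in $\mathbb{R}^{2}$ and its ``volume'' is simply area. First I would rerun the change-of-variables step from the proof of Lemma 1 (which Lemma 2 inherits): choose $T_{1i}>0$ with $T_{1i}^{T}\tilde P_{1i}T_{1i}=I_{2}$, set $\mathbf{z}_{1}=T_{1i}\mathbf{y}$, and observe that the ellipse $\mathbf{z}_{1}^{T}\tilde P_{1i}\mathbf{z}_{1}\le\tilde\delta_{1i}$ is mapped onto the disk $\mathbf{y}^{T}\mathbf{y}\le\tilde\delta_{1i}$ of radius $\sqrt{\tilde\delta_{1i}}$, with Jacobian factor $1/\sqrt{\det(\tilde P_{1i})}$. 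The planar disk has area $\pi\tilde\delta_{1i}$, so multiplying the two factors yields the formula $Area_{1i}=\pi\tilde\delta_{1i}/\sqrt{\det(\tilde P_{1i})}$ stated in (\ref{eq.22}).

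For the equivalence, I would start from $Area_{1i}>Area_{1j}$ and substitute the just-derived formula. Since $\tilde P_{1i}$ and $\tilde P_{1j}$ are positive definite, $\det(\tilde P_{1i})$ and $\det(\tilde P_{1j})$ are strictly positive, which legitimises cancelling the common factor $\pi$ and cross-multiplying the square-root denominators to obtain $\tilde\delta_{1i}\sqrt{\det(\tilde P_{1j})}>\tilde\delta_{1j}\sqrt{\det(\tilde P_{1i})}$. By definition (\ref{eq.20}), $\tilde\delta_{1i},\tilde\delta_{1j}>0$, so both sides are positive and squaring preserves the direction of the inequality, giving $\tilde\delta_{1i}^{2}\det(\tilde P_{1j})>\tilde\delta_{1j}^{2}\det(\tilde P_{1i})$, i.e.\ the stated inequality with the indices $i$ and $j$ swapped on the two sides as written in the corollary. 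All steps are reversible, so the implication is in fact an equivalence. Selection of the driver node then reduces to taking the index that maximises the finite positive sequence $\{Area_{1i}\}_{i=1}^{p}$, which follows at once from part (ii) of Lemma 2.

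Because this is essentially a one-line specialisation of Lemma 2 to $k=2$, I do not expect any substantive obstacle. The only care required is in justifying the algebraic manipulations: one must cite positivity of the determinants of $\tilde P_{1i}$ to cross-multiply and positivity of $\tilde\delta_{1i}$ (from (\ref{eq.20})) to square, and one must record the elementary fact that the two-dimensional analogue of $S(0,\sqrt{\tilde\delta_{1i}})$ is the disk area $\pi\tilde\delta_{1i}$ rather than the higher-dimensional sphere volume used in Lemma 2.
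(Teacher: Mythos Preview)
Your proposal is correct and follows exactly the route the paper intends: the paper gives no separate proof of Corollary~2, treating it as an immediate specialisation of Lemma~2 (whose proof in turn is declared ``similar to'' Lemma~1), so your explicit change-of-variables computation for $k=2$ together with the straightforward algebraic equivalence is precisely a filled-in version of the paper's implicit argument.
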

      	
	\section{Simulation Results and Discussion}
	In this section, We consider two numerical examples to verify the theoretical results. First example considers anti-stable network which has all positive eigenvalues and other example has stable and unstable eigenvalues.
	  	\begin{example}
	 	We consider the weighted directed network of two nodes system as shown in Fig. $1$.
	 			\begin{figure}[!hbtp]
	 			\centering
	 			\includegraphics[width=6cm,height=2cm]{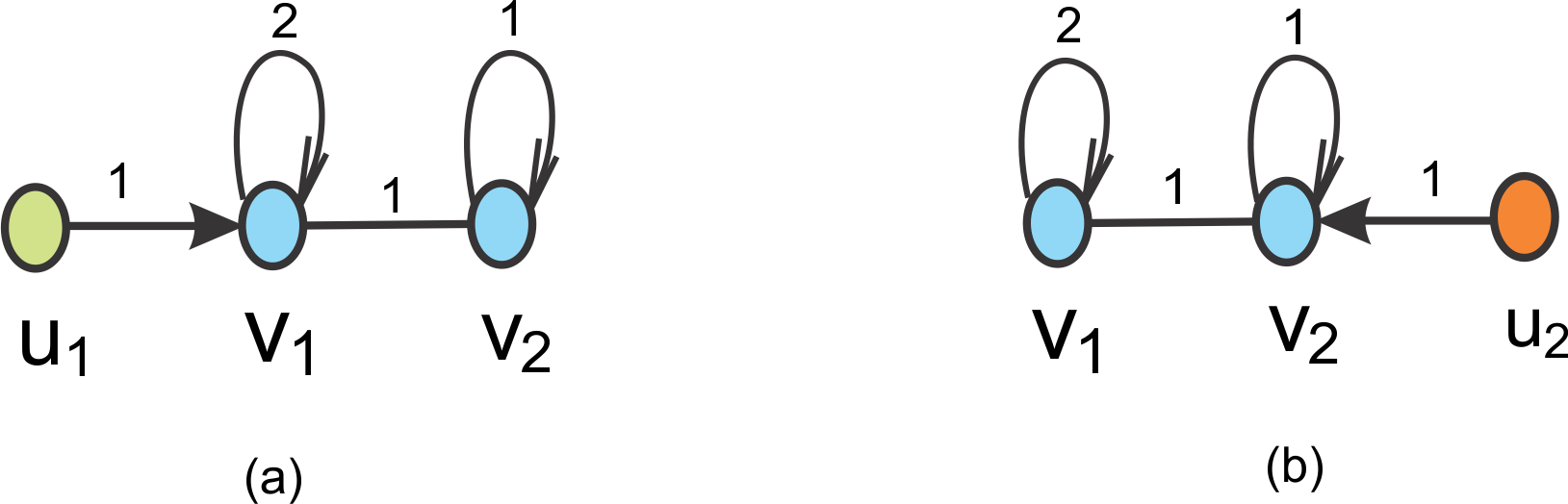}
	 			\captionsetup{format=plain,justification=raggedright}
	 			\caption{Two nodes network system (a) with input  $u_{1}$, (b) with input $u_{2}$.}
	 			\label{fig:1}
	 			\end{figure}
	 	\end{example}	
	     The adjacency and control input matrices are given as:
	 	\begin{equation}
	 	A = \begin{bmatrix}
	 	2 & 1  \\[0.3em]
	 	1 & 1  \\[0.3em]
	 	         \end{bmatrix}, ~~
	 	         B_{1} = \begin{bmatrix}
	 	         	1  \\[0.3em]
	 	         	0  \\[0.3em]
	 	         	         \end{bmatrix}, ~~
	 	         B_{2} = \begin{bmatrix}
	 	         	0  \\[0.3em]
	 	         	1  \\[0.3em]
	 	         	     \end{bmatrix}
	 	         	     						\label{eq.23}
	 	   \end{equation}
	 	\par Let us assume that $B_{i}$ is a variable input matrix which depends on $i^{th}$ driver node in order to control the system. The eigenvalues of the system that can be denoted as $\lambda(A)=\{0.3820,2.6180\}$. Let $Q=I_{2},~ R=1$ and solve the Algebraic Riccati Equation (\ref{eq.3}) for different input matrices $B_{i}$. We obtain the radius of the invariant ellipsoids $\delta_{1}=0.6264$ and $\delta_{2}=0.6068$ using (\ref{eq.5}), for driver node $1$ and $2$ respectively. We get the values of $\sqrt{det(P1)}=7.9030$, $\sqrt{det(P2)}=9.4257$, and also obtain the area of the two ellipsoids using (\ref{eq.12}), they are $0.2473$ Sq. units and $0.2014$ Sq. units respectively. Now we obtain the maximum area of the invariant ellipsoid using lemma $1$ is $0.2473$ Sq. units.
	  The region of attractions of the network considered in example $1$ for different driver node, are plotted in the Fig. $3$. From Fig. $3$, we get the area of first ellipsoid (blue color) which is larger than a second ellipsoid (black color). We conclude here, choosing the first node as the driver node is advantageous because we get the maximum area of the invariant ellipsoid. According to table $1$, results clearly validate the effectiveness of the theoritical results.

\begin{table}[!hbtp]
\centering
\caption{Area of the Contractively Invariant Ellipsoids of different driver nodes}
\begin{tabular}{|c|c|c|}
\hline Driver node & Area of the Contractively & $\frac{Area_{i}}{Area_{max}}$  \\
~& Invariant Ellipsoid & ~\\
\hline 1 & 0.2490 & 1 \\ 
\hline 2 & 0.2023 & 0.8124 \\ 
\hline
\end{tabular}
\end{table} 

\begin{example}
 		We consider the weighted directed network with four nodes system as shown in Fig. $2$.
 					\begin{figure}[!hbtp]
 					\centering
 					\includegraphics[width=5cm,height=4cm]{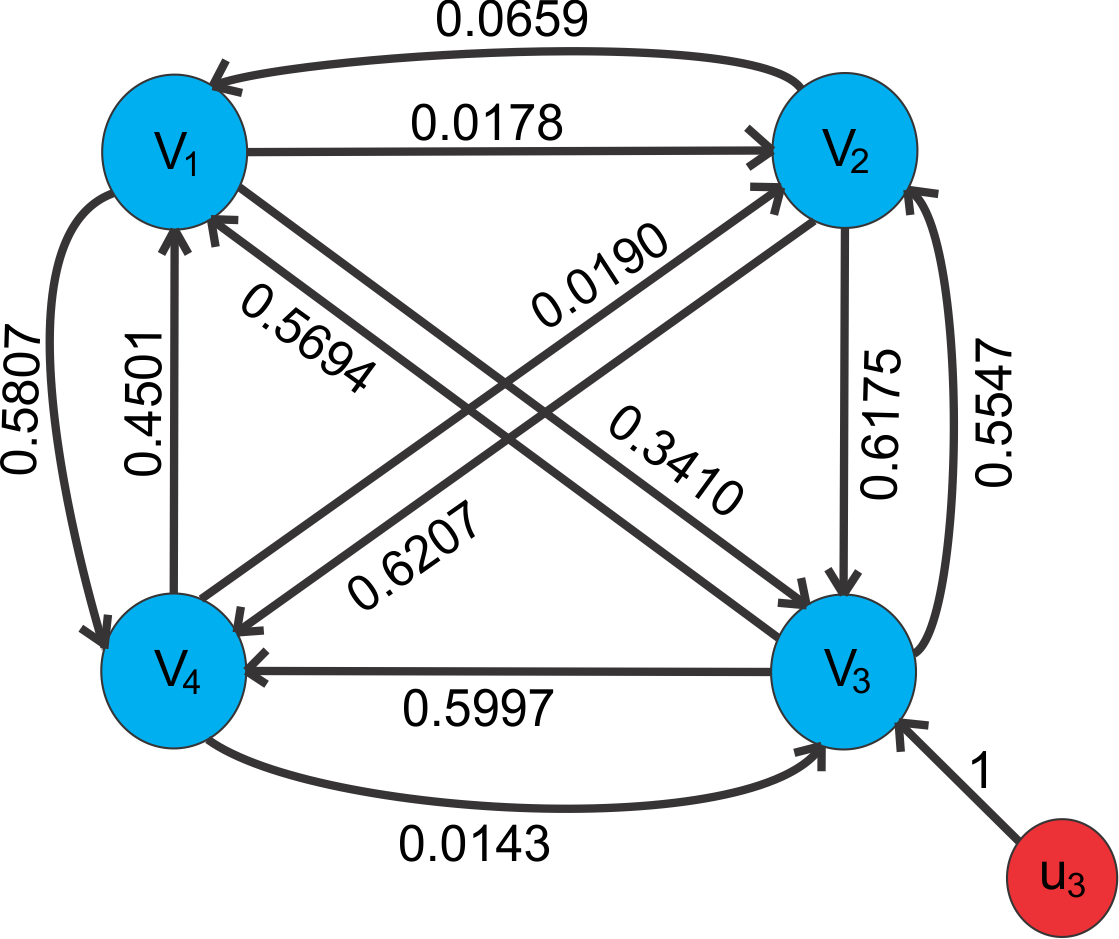}
 					\captionsetup{format=plain,justification=raggedright}
 					\caption{Four nodes network system with control input $u_{3}$. }
 					\label{fig:2}
 					\end{figure}
 		\end{example}			
 		The adjacency matrix and control input matrices are given as:
 		\begin{equation}
 		 		 		\begin{aligned}
 		 		 		A = \begin{bmatrix}
 		 		 		0 & 0.0178 & 0.3410 & 0.5807  \\[0.3em]
 		 		 		0.0659 & 0 & 0.6175 & 0.6207  \\[0.3em]
 		 		 		0.5694 & 0.5547 & 0 & 0.5997  \\[0.3em]
 		 		 	    0.4501 & 0.0190 & 0.0143 & 0  \\[0.3em]
 		 		 		         \end{bmatrix}  \label{eq.24}
 		 		 		\end{aligned}
 		 		 		\end{equation}
 		 		 		\begin{equation}
 		 		 				\begin{aligned}
 		 		 	 B_{1} = \begin{bmatrix}
 		 		 	  	1  \\[0.3em]
 		 		 		0  \\[0.3em]
 		 		 		0  \\[0.3em]
 		 		 		0  \\[0.3em]
 		 		          \end{bmatrix},~~~
 		 		 		         B_{2} = \begin{bmatrix}
 		 		 		         	0  \\[0.3em]
 		 		 		         	1  \\[0.3em]
 		 		 		         	0  \\[0.3em]
 		 		 		         	0  \\[0.3em]
 		 		 		         	     \end{bmatrix}, ~~~
 		 		 		         B_{3} = \begin{bmatrix}
 		 		 		            0  \\[0.3em]
 		 		 		         	0  \\[0.3em]
 		 		 		         	1  \\[0.3em]
 		 		 		         	0  \\[0.3em]
 		 		 		                \end{bmatrix}, ~~~
 		 		 		         B_{4} = \begin{bmatrix}
 		 		 		            0  \\[0.3em]
 		 		 		         	0  \\[0.3em]
 		 		 		         	0  \\[0.3em]
 		 		 		         	1  \\[0.3em]
 		 		 		           	     \end{bmatrix}   \label{eq.25}
 		 		 		      \end{aligned}
 		 		 		     \end{equation}
 Depending on choice of a particular node as a driver node, input matrix $B_{i}$ has four possibilities. The eigenvalues of the network shown in Fig.$2$, obtained as $\lambda(A)=\{0.9613,0.1318,-0.7706,-0.3225\}$. Here two eigenvalues are positive so the network will be partitioned into two subsystem according to (\ref{eq.16}). Now we analyse anti-stable subsystem (\ref{eq.17}). Let $Q=I_{4},~ R_{1}=1$ and solve the Algebraic Riccati Equation (\ref{eq.19}) for different input matrices $B_{i}$ corresponding to different driver nodes. We obtain the radius of the invariant ellipsoids $\tilde{\delta}_{11}=1.2289$, $\tilde{\delta}_{12}=1.4606$, $\tilde{\delta}_{13}=1.6647$ and $\tilde{\delta}_{14}=1.3311$  using $(\ref{eq.20})$. We get the values of $\sqrt{det(\tilde{P}_{11})}=4.8641$, $\sqrt{det(\tilde{P}_{12})}=12.0531$, $\sqrt{det(\tilde{P}_{13})}=13.3164$ and $\sqrt{det(\tilde{P}_{14})}=4.4125$ with $B_{i}$ respectively.
 \par Now from the table $1$, we obtain the maximum area of the invariant ellipsoid $0.9449$ Sq. units, corresponding to driver node $4$. We observe that, the importance of driver node from network stability point of view depends upon the area of the ellipsoids. The region of attractions of the network considered in example $2$ for different driver nodes are plotted in the Fig. $4$.
 		
 From Fig. $4$, we get the largest area when fourth node is selected as input (magenta color) vis-a-vis region corresponding to node $1$, $2$, $3$ (blue,black and red colors). We conclude here, choosing the fourth node as the driver node is advantageous because we get the maximum area of the invariant ellipsoid. According to table $2$, results clearly validate the effectiveness of the theoretical results.
 		\begin{table}
 		\centering
 		\captionsetup{format=plain,justification=raggedright}
 		\caption{Area of the Contractively Invariant Ellipsoids of different driver nodes}
 		\small\addtolength{\tabcolsep}{5pt}
 		\begin{tabular}{|c|c|c|}
 		\hline Driver node & Area of the Contractively & $\frac{Area_{i}}{Area_{max}}$  \\
 		~& Invariant Ellipsoid & ~\\
 		\hline 4 & 0.9449 & 1 \\
 		\hline 1 & 0.7917 & 0.8378\\
 		\hline 3 & 0.3910 & 0.4138 \\
 		\hline 2 & 0.3796 & 0.4017 \\
 		\hline
 		\end{tabular}
 		\end{table}

\begin{figure}[!hbtp]
 	\centering
 	\includegraphics[width=7cm,height=3cm]{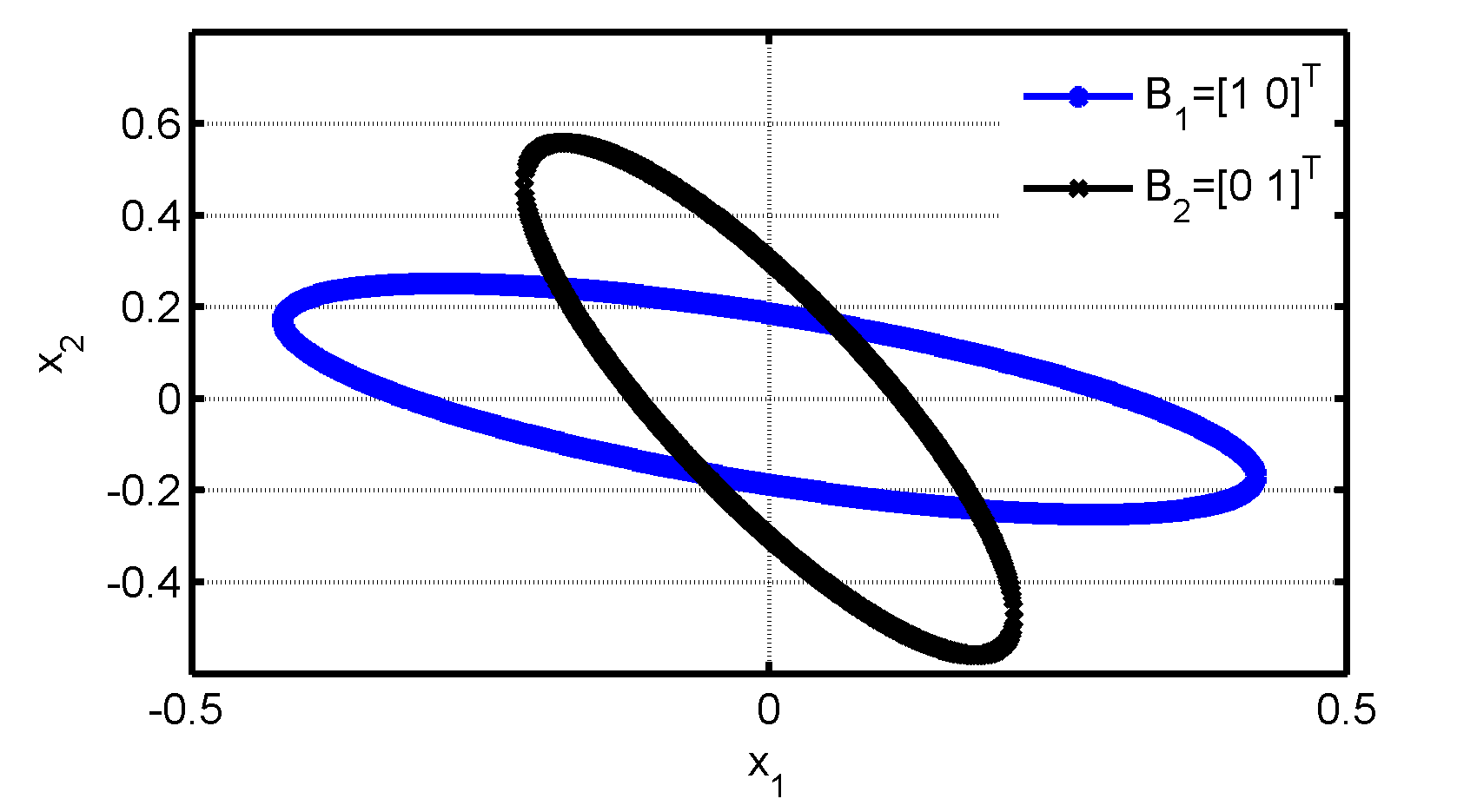}
 	\captionsetup{format=plain,justification=raggedright}
 	\caption{Region of attractions of the network considered in example $1$ for different control input matrices $B_{i}$.}
 			\label{fig:3}
 	\end{figure}

 	\begin{figure}[!hbtp]
 	\centering
 	\includegraphics[width=8.5cm,height=5cm]{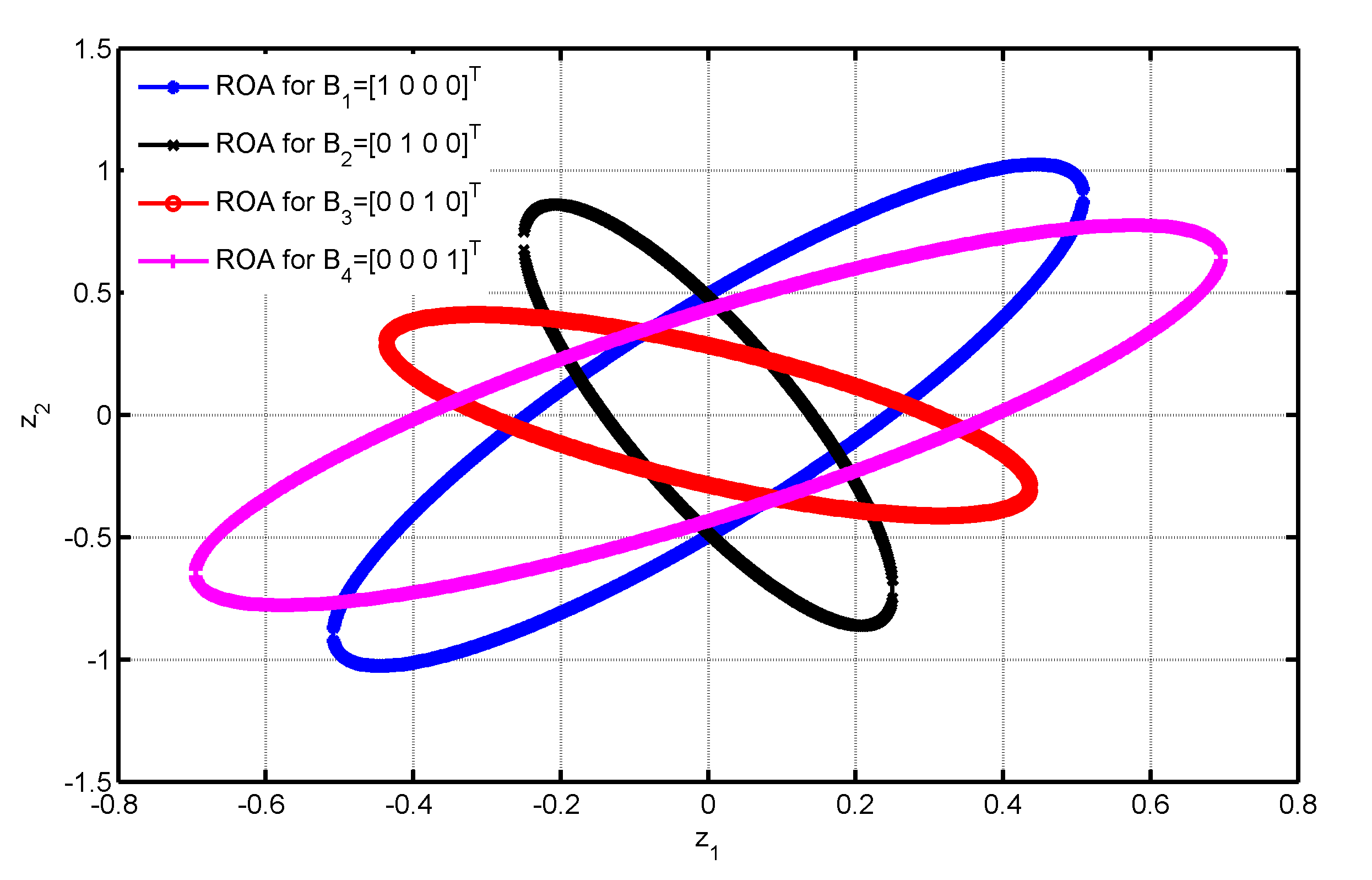}
 	\captionsetup{format=plain,justification=raggedright}
 	\caption{Region of attractions of the network considered in example $2$ for different control input matrices $B_{i}$.}
 								\label{fig:5}
 	\end{figure}

\section{Conclusions}
 The paper has presented a method to select driver nodes. A practical limitation on actuator magnitude has been considered. The proposed theory and algorithms give most suitable driver nodes so that stability with saturated driver node can be ensured in the largest possible region. It has been shown, using numerical example, $40\%$ reduction in stability region occurs if driver node is not selected using the proposed algorithm. A close agreement between theoretical and simulation has been obtained. The determinant of positive definite matrix $P_{i}$ and the radius of the invariant ellipsoids decides which node should be considered as a driver node to control the entire system.






%

\bibliographystyle{IEEEtran}
\bibliography{IEEEabrv,ref}

\begin{thebibliography}{10}
\providecommand{\url}[1]{#1}
\csname url@samestyle\endcsname
\providecommand{\newblock}{\relax}
\providecommand{\bibinfo}[2]{#2}
\providecommand{\BIBentrySTDinterwordspacing}{\spaceskip=0pt\relax}
\providecommand{\BIBentryALTinterwordstretchfactor}{4}
\providecommand{\BIBentryALTinterwordspacing}{\spaceskip=\fontdimen2\font plus
\BIBentryALTinterwordstretchfactor\fontdimen3\font minus
  \fontdimen4\font\relax}
\providecommand{\BIBforeignlanguage}[2]{{%
\expandafter\ifx\csname l@#1\endcsname\relax
\typeout{** WARNING: IEEEtran.bst: No hyphenation pattern has been}%
\typeout{** loaded for the language `#1'. Using the pattern for}%
\typeout{** the default language instead.}%
\else
\language=\csname l@#1\endcsname
\fi
#2}}
\providecommand{\BIBdecl}{\relax}
\BIBdecl

\bibitem{liu2011controllability}
Y.-Y. Liu, J.-J. Slotine, and A.-L. Barab{\'a}si, ``Controllability of complex
  networks,'' \emph{Nature}, vol. 473, no. 7346, pp. 167--173, 2011.

\bibitem{yan2012controlling}
G.~Yan, J.~Ren, Y.-C. Lai, C.-H. Lai, and B.~Li, ``Controlling complex
  networks: How much energy is needed?'' \emph{Physical review letters}, vol.
  108, no.~21, p. 218703, 2012.

\bibitem{controllabilityPRL2007}
A.~Lombardi and M.~H\"ornquist, ``Controllability analysis of networks,''
  \emph{Phys. Rev. E}, vol.~75, p. 056110, May 2007.

\bibitem{controllabilitytransitionPRL2013}
J.~Sun and A.~E. Motter, ``Controllability transition and nonlocality in
  network control,'' \emph{Phys. Rev. Lett.}, vol. 110, p. 208701, May 2013.

\bibitem{networkcontrollabilityPRL2014}
G.~Menichetti, L.~Dall'Asta, and G.~Bianconi, ``Network controllability is
  determined by the density of low in-degree and out-degree nodes,''
  \emph{Phys. Rev. Lett.}, vol. 113, p. 078701, Aug 2014.

\bibitem{Bullo2013controllability}
F.~Pasqualetti, S.~Zampieri, and F.~Bullo, ``Controllability metrics and
  algorithms for complex networks,'' \emph{arXiv preprint arXiv:1308.1201},
  2013.

\bibitem{rahmani2009controllability}
A.~Rahmani, M.~Ji, M.~Mesbahi, and M.~Egerstedt, ``Controllability of
  multi-agent systems from a graph-theoretic perspective,'' \emph{SIAM Journal
  on Control and Optimization}, vol.~48, no.~1, pp. 162--186, 2009.

\bibitem{optimizingcontrollabilityPRL2012}
W.-X. Wang, X.~Ni, Y.-C. Lai, and C.~Grebogi, ``Optimizing controllability of
  complex networks by minimum structural perturbations,'' \emph{Phys. Rev. E},
  vol.~85, p. 026115, Feb 2012.

\bibitem{gutman1982new}
P.-O. Gutman and P.~Hagander, ``A new design of constrained controllers for
  linear systems,'' in \emph{Decision and Control, 1982 21st IEEE Conference
  on}, vol.~21.\hskip 1em plus 0.5em minus 0.4em\relax IEEE, 1982, pp.
  1007--1013.

\bibitem{Thu2001actuatorsaturationbook}
T.~Hu and Z.~Lin, \emph{Control Systems With Actuator Saturation: Analysis and
  Design}.\hskip 1em plus 0.5em minus 0.4em\relax Boston, MA: Birkhauser, 2001.

\bibitem{lauvdal1997stabilization}
T.~Lauvdal and T.~I. Fossen, ``Stabilization of linear unstable systems with
  control constraints,'' in \emph{Decision and Control, 1997., Proceedings of
  the 36th IEEE Conference on}, vol.~5.\hskip 1em plus 0.5em minus 0.4em\relax
  IEEE, 1997, pp. 4504--4509.

\bibitem{turner2001guaranteed}
M.~Turner and I.~Postlethwaite, ``Guaranteed stability regions of linear
  systems with actuator saturation using the low-and-high gain technique,''
  \emph{International Journal of Control}, vol.~74, no.~14, pp. 1425--1434,
  2001.

\bibitem{lowhighreview2009}
Z.~Lin, ``Low gain and low-and-high gain feedback: A review and some recent
  results,'' in \emph{Control and Decision Conference, 2009. CCDC '09.
  Chinese}, June 2009, pp. lii--lxi.

\bibitem{saberi1996control}
A.~Saberi, Z.~Lin, and A.~R. Teel, ``Control of linear systems with saturating
  actuators,'' \emph{Automatic Control, IEEE Transactions on}, vol.~41, no.~3,
  pp. 368--378, 1996.

\bibitem{Henrion1999}
D.~Henrion, G.~Garcia, and S.~Tarbouriech, ``Piecewise-linear robust control of
  systems with input constraints,'' \emph{European Journal of Control}, vol.~5,
  no.~1, pp. 157 -- 166, 1999.

\end{thebibliography}

%
%

\end{document}